\newtheorem{thm}{Theorem}
\newtheorem{cor}{Corollary}
\theoremstyle{definition}
\newtheorem{definition}{Definition}
\def \arxiv {1}
\title{On the Information Loss in Memoryless Systems: The Multivariate Case}
\author{\IEEEauthorblockN{Bernhard C. Geiger\IEEEauthorrefmark{1}, Gernot Kubin\IEEEauthorrefmark{1}
\IEEEauthorblockA{\IEEEauthorrefmark{1}Signal Processing and Speech Communication Laboratory, Graz University of Technology, Austria}
$\{$geiger,gernot.kubin$\}$@tugraz.at}}
\begin{document}
\newcounter{myTempCnt}

\ifthenelse{\arxiv=1}{
\newcommand{\x}[1]{x[#1]}
\newcommand{\y}[1]{y[#1]}

\newcommand{\pdfy}{f_Y(y)}

\newcommand{\ent}[1]{H(#1)}
\newcommand{\diffent}[1]{h(#1)}
\newcommand{\derate}[1]{\bar{h}\left(\mathbf{#1}\right)}
\newcommand{\mutinf}[1]{I(#1)}
\newcommand{\ginf}[1]{I_G(#1)}
\newcommand{\kld}[2]{D(#1||#2)}
\newcommand{\binent}[1]{H_2(#1)}
\newcommand{\binentneg}[1]{H_2^{-1}\left(#1\right)}
\newcommand{\entrate}[1]{\bar{H}(\mathbf{#1})}
\newcommand{\mutrate}[1]{\mutinf{\mathbf{#1}}}
\newcommand{\redrate}[1]{\bar{R}(\mathbf{#1})}
\newcommand{\pinrate}[1]{\vec{I}(\mathbf{#1})}
\newcommand{\lossrate}[1]{L(\mathbf{#1})}

\newcommand{\dom}[1]{\mathcal{#1}}
\newcommand{\indset}[1]{\mathbb{I}\left({#1}\right)}

\newcommand{\unif}[2]{\mathcal{U}\left(#1,#2\right)}
\newcommand{\chis}[1]{\chi^2\left(#1\right)}
\newcommand{\chir}[1]{\chi\left(#1\right)}
\newcommand{\normdist}[2]{\mathcal{N}\left(#1,#2\right)}
\newcommand{\Prob}[1]{\mathrm{Pr}(#1)}
\newcommand{\Mar}[1]{\mathrm{Mar}(#1)}
\newcommand{\Qfunc}[1]{Q\left(#1\right)}

\newcommand{\expec}[1]{\mathrm{E}\left\{#1\right\}}
\newcommand{\expecwrt}[2]{\mathrm{E}_{#1}\left\{#2\right\}}
\newcommand{\var}[1]{\mathrm{Var}\left\{#1\right\}}
\renewcommand{\det}{\mathrm{det}}
\newcommand{\cov}[1]{\mathrm{Cov}\left\{#1\right\}}
\newcommand{\sgn}[1]{\mathrm{sgn}\left(#1\right)}
\newcommand{\sinc}[1]{\mathrm{sinc}\left(#1\right)}
\newcommand{\e}[1]{\mathrm{e}^{#1}}
\newcommand{\multint}{\iint{\cdots}\int}
\newcommand{\modd}[3]{((#1))_{#2}^{#3}}
\newcommand{\quant}[1]{Q\left(#1\right)}

\newcommand{\ivec}{\mathbf{i}}
\newcommand{\hvec}{\mathbf{h}}
\newcommand{\gvec}{\mathbf{g}}
\newcommand{\avec}{\mathbf{a}}
\newcommand{\kvec}{\mathbf{k}}
\newcommand{\fvec}{\mathbf{f}}
\newcommand{\vvec}{\mathbf{v}}
\newcommand{\xvec}{\mathbf{x}}
\newcommand{\Xvec}{\mathbf{X}}
\newcommand{\Xhvec}{\hat{\mathbf{X}}}
\newcommand{\xhvec}{\hat{\mathbf{x}}}
\newcommand{\xtvec}{\tilde{\mathbf{x}}}
\newcommand{\Yvec}{\mathbf{Y}}
\newcommand{\yvec}{\mathbf{y}}
\newcommand{\Zvec}{\mathbf{Z}}
\newcommand{\wvec}{\mathbf{w}}
\newcommand{\Wvec}{\mathbf{W}}
\newcommand{\Hmat}{\mathbf{H}}
\newcommand{\Amat}{\mathbf{A}}
\newcommand{\Fmat}{\mathbf{F}}

\newcommand{\zerovec}{\mathbf{0}}
\newcommand{\eye}{\mathbf{I}}
\newcommand{\evec}{\mathbf{i}}

\newcommand{\zeroone}{\left[\begin{array}{c}\zerovec^T\\ \eye\end{array} \right]}
\newcommand{\zerooneT}{\left[\begin{array}{cc}\zerovec & \eye\end{array} \right]}
\newcommand{\zerooneM}{\left[\begin{array}{cc}\zerovec &\zerovec^T\\\zerovec& \eye\end{array} \right]}

\newcommand{\Cxx}{\mathbf{C}_{XX}}
\newcommand{\Cxh}{\mathbf{C}_{\hat{X}\hat{X}}}
\newcommand{\rxx}{\mathbf{r}_{XX}}
\newcommand{\Cxy}{\mathbf{C}_{XY}}
\newcommand{\Cyy}{\mathbf{C}_{YY}}
\newcommand{\Cnn}{\mathbf{C}_{NN}}
\newcommand{\Cyx}{\mathbf{C}_{YX}}
\newcommand{\Cygx}{\mathbf{C}_{Y|X}}

\newcommand{\Jac}[2]{\mathcal{J}_{#1}(#2)}

\newcommand{\NN}{{N{\times}N}}
\newcommand{\perr}{P_e}
\newcommand{\perh}{\hat{\perr}}
\newcommand{\pert}{\tilde{\perr}}

\newcommand{\vecind}[1]{#1_0^n}
\newcommand{\roots}[2]{{#1}_{#2}^{(i_{#2})}}
\newcommand{\rootx}[1]{x_{#1}^{(i)}}
\newcommand{\rootn}[2]{x_{#1}^{#2,(i)}}

\newcommand{\markkern}[1]{f_M(#1)}
\newcommand{\pole}{a_1}
\newcommand{\preim}[1]{g^{-1}[#1]}
\newcommand{\preimV}[1]{\mathbf{g}^{-1}[#1]}
\newcommand{\Xmax}{\bar{X}}
\newcommand{\Xmin}{\underbar{X}}
\newcommand{\xmax}{x_{\max}}
\newcommand{\xmin}{x_{\min}}
\newcommand{\limn}{\lim_{n\to\infty}}
\newcommand{\limX}{\lim_{\hat{\Xvec}\to\Xvec}}
\newcommand{\limXo}{\lim_{\hat{X}_1\to X_1}}
\newcommand{\sumin}{\sum_{i=1}^n}
\newcommand{\finv}{f_\mathrm{inv}}
\newcommand{\ejtheta}{\e{\jmath\theta}}
\newcommand{\khat}{\bar{k}}
\newcommand{\modeq}[1]{g(#1)}
\newcommand{\partit}[1]{\mathcal{P}_{#1}}
\newcommand{\psd}[1]{S_{#1}(\e{\jmath \theta})}
\newcommand{\borel}[1]{\mathfrak{B}(#1)}

\newcommand{\delay}[2]{\psblock(#1){#2}{\footnotesize$z^{-1}$}}
\newcommand{\Quant}[2]{\psblock(#1){#2}{\footnotesize$\quant{\cdot}$}}
\newcommand{\moddev}[2]{\psblock(#1){#2}{\footnotesize$\modeq{\cdot}$}}}{}

\maketitle

\begin{abstract}
In this work we give a concise definition of information loss from a system-theoretic point of view. Based on this definition, we analyze the information loss in memoryless input-output systems subject to a continuous-valued input. For a certain class of multiple-input, multiple-output systems the information loss is quantified. An interpretation of this loss is accompanied by upper bounds which are simple to evaluate.

Finally, a class of systems is identified for which the information loss is necessarily infinite. Quantizers and limiters are shown to belong to this class.
\end{abstract}

\section{Introduction}\label{sec:intro}
In the XXXI. Shannon lecture Han argued that information theory links information-theoretic quantities, such as entropy and mutual information, to operational quantities such as source, channel, capacity, and error probability~\cite{Han_Musing}. In this work we try to make a new link to an operational quantity not mentioned by Han: information loss. Information can be lost, on the one hand, in erasures or due to superposition of noise as it is known from communication theory. Dating back to Shannon~\cite{Shannon_TheoryOfComm} this loss is linked to the conditional entropy of the input given the output, at least in discrete-amplitude, memoryless settings. On the other hand, as stated by the data processing inequality (DPI,~\cite{Gray_Entropy}), information can be lost in deterministic, noiseless systems. It is this kind of loss that we will treat in this work, and we will show that it makes sense to link it to the same information-theoretic quantity.

The information loss in input-output systems is very sparsely covered in the literature. Aside from the DPI for discrete random variables (RV) and static systems, some results are available for jointly stationary stochastic processes~\cite{Pinsker_InfoEngl}. Yet, all these results just state that \emph{information is lost}, without quantifying this loss. Only in~\cite{Watanabe_InfoLoss} the information lost by collapsing states of a discrete-valued stochastic process is quantified as the difference between the entropy rates at the input and the output of the memoryless system.

Conversely, energy loss in input-output systems has been deeply analyzed, leading to meaningful definitions of transfer functions and notions of passivity, stability, and losslessness. Essentially, it is our aim to develop a system theory not from an energetic, but from an information-theoretic point of view. So far we analyzed the information loss of discrete-valued stationary stochastic processes in finite-dimensional dynamical input-output systems~\cite{Geiger_NLDyn1starXiv}, where we proposed an upper bound on the information loss and identified a class of information-preserving systems (the information-theoretic counterpart to lossless systems). In~\cite{Geiger_ISIT2011arXiv} the information loss of continuous RVs in memoryless systems was quantified and bounded in a preliminary way. In this work, extending~\cite{Geiger_ISIT2011arXiv}, we analyze the information loss for static multiple-input, multiple-output systems which are subject to a continuous input RV. Unlike in our previous work, we permit functions which lose an infinite amount of information and present the according conditions. Aside from that we provide a link between information loss and differential entropy, a quantity which is not invariant under changes of variables. The next steps towards an information-centered system theory are the analysis of discrete-time dynamical systems with continuous-valued stationary input processes and a treatment of information loss in multirate systems.

In the remainder of this paper we give a mathematically concise definition of information loss (Section~\ref{sec:defLoss}). After restricting the class of systems in Section~\ref{sec:problem}, in Section~\ref{sec:mainResults} we provide exact results for information loss together with simple bounds, and establish a link to differential entropies. Finally, in Section~\ref{sec:extension} we show under which conditions the information loss becomes infinite. \ifthenelse{\arxiv=1}{

This manuscript is an extended version of a paper submitted to a conference.}{An extended version of this paper, accompanied by several examples illustrating the theoretical results, is available in~\cite{Geiger_ILStatic_arXiv}.}

\section{A Definition of Information Loss}\label{sec:defLoss}
When talking about the information loss induced by processing of signals, it is of prime importance to accompany this discussion by a well-based definition of information loss going beyond, but without lacking, intuition. \ifthenelse{\arxiv=1}{Further, the definition shall also allow generalizations to stochastic processes and dynamical systems without contradicting previous statements. We try to meet both objectives with the following}{We try to meet this objective with the following}

\ifthenelse{\arxiv=1}{ 
\begin{definition}\label{def:loss}
 Let $X$ be an RV\footnote{Note that $X$ and all other involved RVs need not be scalar-valued.} on the samples space $\dom{X}$, and let $Y$ be obtained by transforming $X$. We define the information loss induced by this transform as
\begin{equation}
 L(X\to Y) = \sup_{\partit{}} \left(\mutinf{\hat{X};X}-\mutinf{\hat{X};Y}\right)
\end{equation}
where the supremum is over all partitions $\partit{}$ of $\dom{X}$, and where $\hat{X}$ is obtained by quantizing $X$ according to the partition $\partit{}$ (see Fig.~\ref{fig:sysmod}).
\end{definition}
}
{\begin{definition}\label{def:loss}
 Let $X$ be an RV\footnote{Note that $X$ and all other involved RVs need not be scalar-valued.} on the samples space $\dom{X}$, and let $Y$ be obtained by transforming $X$. We define the information loss induced by this transform as
\begin{equation}
 L(X\to Y) = \sup_{\partit{}} \left(\mutinf{\hat{X};X}-\mutinf{\hat{X};Y}\right)
\end{equation}
where the supremum is over all partitions $\partit{}$ of $\dom{X}$, and where $\hat{X}$ is obtained by quantizing $X$ according to the partition $\partit{}$ (see Fig.~\ref{fig:sysmod}).
\end{definition}
}

This Definition is motivated by the data processing inequality (cf.~\cite{Gray_Entropy}), which states that the expression under the supremum is always non-negative: Information loss is the worst-case reduction of information about $\hat{X}$ induced by transforming $X$. We now try to shed a little more light on Definition~\ref{def:loss} in the following

\begin{figure}[t]
 \centering  
\begin{pspicture}[showgrid=false](1,1)(8,3.5)
 	\pssignal(1,2){x}{$\hat{X}$}
 	\pssignal(3,1){n}{$\partit{}$}
	\psfblock[framesize=1 0.75](3,2){oplus}{$Q(\cdot)$}
	\psfblock[framesize=1.5 1](6,2){c}{$g(\cdot)$}
	\pssignal(8,2){y}{$Y$}
  \ncline[style=Arrow]{n}{oplus}
	\ncline[style=Arrow]{oplus}{x}
 \nclist[style=Arrow]{ncline}[naput]{oplus,c $X$,y}
  \ncline[style=Arrow]{c}{oplus}
	\psline[style=Dash](2,2.75)(4,2.75)
	\psline[style=Dash](2,2.25)(2,2.75)
	\psline[style=Dash](4,2.25)(4,2.75)
	\psline[style=Dash](1.75,3.25)(7.25,3.25)
	\psline[style=Dash](1.75,2.25)(1.75,3.25)
	\psline[style=Dash](7.25,2.25)(7.25,3.25)
 	\rput*(3,2.75){\scriptsize{$\mutinf{\hat{X};X}$}}
	\rput*(4.5,3.25){\scriptsize{$\mutinf{\hat{X};Y}$}}
\end{pspicture}
\caption{Model for computing the information loss of a memoryless input-output system $g$. $Q$ is a quantizer with partition $\partit{}$.}
\label{fig:sysmod}
\end{figure}
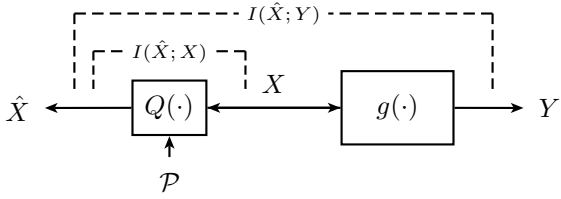

\ifthenelse{\arxiv=1}{
\begin{thm}\label{thm:defEq}
The information loss of Definition~\ref{def:loss} is given by:
\begin{IEEEeqnarray}{RCL}
  L(X\to Y) 
&=& \lim_{\hat{X}\to X} \left(\mutinf{\hat{X};X}-\mutinf{\hat{X};Y}\right)\label{eq:limPart}\\
&=& \ent{X|Y}\label{eq:diffEnt}
\end{IEEEeqnarray}
\end{thm}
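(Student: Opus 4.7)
The plan is to unpack the expression $\mutinf{\hat{X};X}-\mutinf{\hat{X};Y}$ inside the supremum using the fact that both $Y=g(X)$ and $\hat{X}=Q(X)$ are deterministic functions of $X$, and then handle the passage to the supremum/limit separately.

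For the first step, applying the chain rule to $\mutinf{\hat{X};X,Y}$ in two different ways yields
\begin{equation*}
\mutinf{\hat{X};X}+\mutinf{\hat{X};Y|X}=\mutinf{\hat{X};Y}+\mutinf{\hat{X};X|Y}.
\end{equation*}
Since $Y$ is a deterministic function of $X$, the term $\mutinf{\hat{X};Y|X}$ vanishes; since $\hat{X}$ is a deterministic function of $X$, $\ent{\hat{X}|X,Y}=0$. Combining these yields
\begin{equation*}
\mutinf{\hat{X};X}-\mutinf{\hat{X};Y}=\mutinf{\hat{X};X|Y}=\ent{\hat{X}|Y},
\end{equation*}
a quantity that is well-defined and non-negative because the quantized $\hat{X}$ is discrete, even though neither $\mutinf{\hat{X};X}$ nor $\mutinf{\hat{X};Y}$ need be finite individually.

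For the second step, I would establish monotonicity of $\ent{\hat{X}|Y}$ under refinement of $\partit{}$: if $\partit{}'$ refines $\partit{}$ and $\hat{X}'$ denotes the corresponding finer quantization, then $\hat{X}=h(\hat{X}')$ for some deterministic map $h$, and therefore $\ent{\hat{X}|Y}=\ent{h(\hat{X}')|Y}\le\ent{\hat{X}'|Y}$. Consequently, along any sequence of partitions whose cell diameters shrink to zero (this is what the notation $\hat{X}\to X$ is meant to formalize), $\ent{\hat{X}|Y}$ is monotonically non-decreasing, so its limit coincides with the supremum over all partitions. This gives \eqref{eq:limPart}.

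For the third step, the identification $\sup_{\partit{}}\ent{\hat{X}|Y}=\ent{X|Y}$ is exactly the Dobrushin/Pinsker characterization of conditional entropy for a general (possibly continuous) RV; it recovers the usual formula in the discrete case and, in the continuous case, allows $\ent{X|Y}$ to be infinite when $Y$ does not essentially determine $X$. The main obstacle will be the careful handling of the supremum: one has to range over all (measurable) partitions of $\dom{X}$, ensure $\hat{X}$ remains a bona fide RV for each, and verify that a suitable countable refining sequence actually realises the supremum, so that the monotone-convergence step matches the supremum definition. Once these standard measure-theoretic details are settled, both \eqref{eq:limPart} and \eqref{eq:diffEnt} follow directly from the three steps above.
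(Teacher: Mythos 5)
Your proposal is correct and follows essentially the same route as the paper: both identify the quantity under the supremum as $\ent{\hat{X}|Y}$ (you via the chain rule for mutual information, the paper via the definition of mutual information and the determinism of $\hat{X}$ and $Y$), and both then realise the supremum as a monotone limit of $\ent{\hat{X}_n|Y}$ along a refining sequence of partitions, equal to $\ent{X|Y}$. The only cosmetic difference is that you cite the Dobrushin/Pinsker characterization of conditional entropy for the final identification, whereas the paper combines the data-processing bound $\ent{\hat{X}|Y}\leq\ent{X|Y}$ with convergence of entropy under refinement.
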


\begin{proof}
 We start by noticing that
\begin{equation}
 \mutinf{\hat{X};X}-\mutinf{\hat{X};Y} = \ent{\hat{X}|Y}\leq\ent{X|Y}\label{eq:boundIL}
\end{equation}
by the definition of mutual information and since both $\hat{X}$ and $Y$ are functions of $X$. The inequality in~\eqref{eq:boundIL} is due to data processing ($\hat{X}$ is a function of $X$). We now show that in the supremum over all partitions equality can be achieved.

To this end, observe that among all partitions of the sample space of $X$ there is a sequence $\{\partit{n}\}$ of increasingly fine partitions\footnote{i.e., $\partit{n+1}$ is a refinement of $\partit{n}$.} such that
\begin{equation}
 \limn\hat{X}_n=X\label{eq:partLim}
\end{equation}
where $\hat{X}_n$ is the quantization of $X$ induced by partition $\partit{n}$.
By the axioms of entropy (e.g.,~\cite[Ch.~14]{Papoulis_Probability}), $\ent{\hat{X}_n|Y}$ is an increasing sequence in $n$ with limit $\ent{X|Y}$. Thus, this limit represents the supremum in Definition~\ref{def:loss}, which proves~\eqref{eq:diffEnt}.

Note further that each converging sequence $\hat{X}\to X$ contains a converging subsequence $\hat{X}_n\to X$ satisfying~\eqref{eq:partLim}, and where $\hat{X}_{n+1}$ is obtained by refining the partition inducing $\hat{X}_n$. Therefore,
\begin{equation}
  \lim_{\hat{X}\to X} \ent{\hat{X}|Y}=\limn \ent{\hat{X}_n|Y} =\ent{X|Y}
\end{equation}
which completes the proof.
\end{proof}
}
{
\begin{thm}\label{thm:defEq}
The information loss of Definition~\ref{def:loss} is given by the conditional entropy of the input given the output, i.e.,
\begin{equation}
  L(X\to Y) = \ent{X|Y}\label{eq:diffEnt}.
\end{equation}
\end{thm}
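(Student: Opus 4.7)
The plan is to first simplify the expression inside the supremum and then show the supremum is attained in the limit of increasingly fine partitions.

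For any partition $\partit{}$ of $\dom{X}$, the quantized variable $\hat{X}$ is a deterministic function of $X$, so $\mutinf{\hat{X};X}=\ent{\hat{X}}$. Combining this with $\mutinf{\hat{X};Y}=\ent{\hat{X}}-\ent{\hat{X}|Y}$, the bracketed expression collapses to $\ent{\hat{X}|Y}$. Since $\hat{X}$ is a function of $X$, the data processing inequality (applied to the chain $\hat{X}\to X\to Y$, or proved directly via the chain rule $\ent{\hat{X},X|Y}=\ent{X|Y}=\ent{\hat{X}|Y}+\ent{X|\hat{X},Y}$) yields $\ent{\hat{X}|Y}\leq\ent{X|Y}$. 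Taking the supremum over $\partit{}$ gives the upper bound $L(X\to Y)\leq\ent{X|Y}$.

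For the matching lower bound, I would exhibit a sequence $\{\partit{n}\}$ of successively refining partitions of $\dom{X}$ (for instance, nested dyadic cubes truncated to an expanding ball in $\dom{X}$), so that the induced quantizations $\hat{X}_n$ converge to $X$. Because refining a partition cannot decrease the entropy of the resulting conditional random variable, the sequence $\ent{\hat{X}_n|Y}$ is non-decreasing. Its limit is therefore a lower bound on the supremum in Definition~\ref{def:loss}, and the theorem will follow once this limit is identified with $\ent{X|Y}$.

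The main obstacle is precisely this monotone convergence $\ent{\hat{X}_n|Y}\nearrow\ent{X|Y}$. One clean route is to invoke the axiomatic characterization of entropy, under which $\ent{X|Y}$ is \emph{defined} as the supremum of $\ent{\hat{X}|Y}$ over all finite partitions of $\dom{X}$; any refining sequence of partitions whose cells shrink to points then attains this supremum. Alternatively, one can argue by monotone convergence on the conditional masses, carefully handling the case $\ent{X|Y}=\infty$ by truncation. Either way, combining the two inequalities yields $L(X\to Y)=\ent{X|Y}$, as claimed.
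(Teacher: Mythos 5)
Your proposal is correct and follows essentially the same route as the paper: collapse the bracket to $\ent{\hat{X}|Y}$, bound it by $\ent{X|Y}$ via data processing, and attain the bound with a refining sequence of partitions $\hat{X}_n\to X$ using the axiomatic (supremum-over-partitions) characterization of $\ent{X|Y}$. The paper likewise delegates the key monotone-convergence step $\ent{\hat{X}_n|Y}\nearrow\ent{X|Y}$ to the axioms of entropy, so your explicit flagging of that step as the main obstacle is consistent with, and if anything slightly more careful than, the published argument.
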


\begin{proof}
 We start by noticing that
\begin{equation}
 \mutinf{\hat{X};X}-\mutinf{\hat{X};Y} = \ent{\hat{X}|Y}\leq\ent{X|Y}\label{eq:boundIL}
\end{equation}
by the definition of mutual information and since both $\hat{X}$ and $Y$ are functions of $X$. The inequality in~\eqref{eq:boundIL} is due to data processing ($\hat{X}$ is a function of $X$). We now show that in the supremum over all partitions equality can be achieved.

To this end, observe that among all partitions of the sample space of $X$ there is a sequence $\{\partit{n}\}$ of increasingly fine partitions\footnote{i.e., $\partit{n+1}$ is a refinement of $\partit{n}$.} such that
\begin{equation}
 \limn\hat{X}_n=X\label{eq:partLim}
\end{equation}
where $\hat{X}_n$ is the quantization of $X$ induced by partition $\partit{n}$.
By the axioms of entropy (e.g.,~\cite[Ch.~14]{Papoulis_Probability}), $\ent{\hat{X}_n|Y}$ is an increasing sequence in $n$ with limit $\ent{X|Y}$. Thus, equality in~\eqref{eq:boundIL} is achieved, which completes the proof.
\end{proof}}

This Theorem shows that the supremum in Definition~\ref{def:loss} is achieved for $\hat{X}\equiv X$, i.e., when we compute the difference between the \emph{self-information} of the input and the information the output of the system contains about its input. \ifthenelse{\arxiv=1}{This difference was shown to be identical to the conditional entropy of the input given the output -- the quantity which is also used for quantifying the information loss due to noise or erasures (in the discrete-valued, memoryless case).}{} In addition to that, the Theorem suggests a natural way to measure the information loss via measuring mutual informations, as it is depicted in Fig.~\ref{fig:sysmod}. As we will see later (cf.~Theorem~\ref{thm:WisX}), the considered partition does not have to be infinitely fine, but indeed a comparably coarse partition can deliver the correct result.

\section{Problem Statement}\label{sec:problem}
Let $\Xvec=[X_1, X_2,\dots, X_N]$ be an $N$-dimensional RV with a probability measure $P_\Xvec$ absolutely continuous w.r.t. the Lebesgue measure $\mu$ ($P_\Xvec \ll \mu$). We require $P_\Xvec$ to be concentrated on $\dom{X}\subseteq\mathbb{R}^N$. This RV, which possesses a unique probability density function (PDF) $f_\Xvec$, is the input to the following multivariate, vector-valued function:
\begin{definition}\label{def:function}
Let $\gvec{:}\  \dom{X}\to\dom{Y}$, $\dom{X},\dom{Y}\subseteq\mathbb{R}^N$, be a surjective, Borel-measurable function defined in a piecewise manner:
\begin{equation}
 \gvec(\xvec) = \begin{cases}
             \gvec_1(\xvec), & \text{if } \xvec\in\dom{X}_1\\
             \gvec_2(\xvec), & \text{if } \xvec\in\dom{X}_2\\
							\vdots
            \end{cases}
\end{equation}
where $\xvec=[x_1, x_2, \dots, x_N]$ and $\gvec_i{:}\ \dom{X}_i\to\dom{Y}_i$ bijectively\ifthenelse{\arxiv=1}{\footnote{In the univariate case, i.e., for $N=1$, this is equivalent to requiring that $g$ is piecewise strictly monotone.}}{}. Furthermore, let the Jacobian matrix $\Jac{\gvec}{\cdot}$ exist on the closures of $\dom{X}_i$. In addition to that, we require the Jacobian determinant, $|\det\Jac{\gvec}{\cdot}|$, to be non-zero $P_\Xvec$-almost everywhere.
\end{definition}

In accordance with previous work~\cite{Geiger_ISIT2011arXiv} the $\dom{X}_i$ are disjoint sets of positive $P_\Xvec$-measure which unite to $\dom{X}$, i.e., $\bigcup_{i} \dom{X}_i=\dom{X}$ and $\dom{X}_i \cap \dom{X}_j=\emptyset$ if $i\neq j$. Clearly, also the $\dom{Y}_i$ unite to $\dom{Y}$, but need not be disjoint. This definition ensures that the preimage $\preimV{\yvec}$ of each element $\yvec\in\dom{Y}$ is a countable set.

Using the method of transformation~\cite[pp.~244]{Papoulis_Probability} one obtains the PDF of the $N$-dimensional output RV $\Yvec=[Y_1,Y_2,\dots ,Y_N]$ as
\begin{equation}
 f_\Yvec(\yvec) = \sum_{\xvec_i\in\preimV{\yvec}} \frac{f_\Xvec(\xvec_i)}{|\det\Jac{\gvec}{\xvec_i}|}\label{eq:fy}
\end{equation}
where the sum is over all elements of the preimage. Note that since $\Yvec$ possesses a density, the corresponding probability measure $P_\Yvec$ is also absolutely continuous w.r.t. the Lebesgue measure.

\section{Main Results}\label{sec:mainResults}
We now state our main results:
\begin{thm}\label{thm:loss}
 The information loss induced by a function $\gvec$ satisfying Definition~\ref{def:function} is given as
\begin{equation}
 \ent{\Xvec|\Yvec} = \int_{\dom{X}} f_\Xvec(\xvec) \log\left( \frac{\sum_{\xvec_i\in\preimV{\gvec(\xvec)}} \frac{f_\Xvec(\xvec_i)}{|\det\Jac{\gvec}{\xvec_i}|}}{\frac{f_\Xvec(\xvec)}{|\det\Jac{\gvec}{\xvec}|}}\right) d\xvec.\label{eq:loss}
\end{equation}
\end{thm}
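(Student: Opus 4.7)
The plan is to combine Theorem~\ref{thm:defEq}, which reduces the information loss to the conditional entropy $\ent{\Xvec|\Yvec}$, with a direct computation exploiting the key structural fact guaranteed by Definition~\ref{def:function}: for every $\yvec \in \dom{Y}$ the preimage $\preimV{\yvec}$ is countable, so that the conditional distribution of $\Xvec$ given $\Yvec=\yvec$ is \emph{discrete}, supported on $\preimV{\yvec}$. Hence $\ent{\Xvec|\Yvec}$ is an ordinary (discrete) conditional entropy integrated against $f_\Yvec$, and no delicate continuous-entropy manipulations are needed.

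First I would identify the conditional probability mass function. Each branch $\gvec_i$ is a diffeomorphism, and the change-of-variables formula~\eqref{eq:fy} presents $f_\Yvec(\yvec)$ as a sum of the per-branch contributions $f_\Xvec(\xvec_i)/|\det\Jac{\gvec}{\xvec_i}|$. A standard Bayesian argument (or the limit of mass/volume ratios over a shrinking neighborhood of $\yvec$) then yields
\begin{equation}
 \Prob{\Xvec=\xvec_i \mid \Yvec=\yvec} = \frac{f_\Xvec(\xvec_i)/|\det\Jac{\gvec}{\xvec_i}|}{f_\Yvec(\yvec)},\qquad \xvec_i \in \preimV{\yvec}.
\end{equation}
Consequently
\begin{equation}
 \ent{\Xvec|\Yvec} = -\int_{\dom{Y}} \sum_{\xvec_i\in\preimV{\yvec}} \frac{f_\Xvec(\xvec_i)}{|\det\Jac{\gvec}{\xvec_i}|}\log\left(\frac{f_\Xvec(\xvec_i)/|\det\Jac{\gvec}{\xvec_i}|}{f_\Yvec(\yvec)}\right) d\yvec,
\end{equation}
where the outer $f_\Yvec(\yvec)$ cancels against the normalization in the conditional pmf.

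Next I would convert this $\dom{Y}$-integral into the desired $\dom{X}$-integral. Writing $\dom{Y} = \bigcup_i \dom{Y}_i$ and splitting the countable sum over $\preimV{\yvec}$ across the branches lets me interchange sum and integral (terms are non-negative after a routine sign bookkeeping, so Tonelli applies). On each $\dom{Y}_i$ the sole relevant preimage is $\xvec = \gvec_i^{-1}(\yvec)$, so I substitute $\yvec = \gvec_i(\xvec)$ with Jacobian $|\det\Jac{\gvec}{\xvec}|\,d\xvec = d\yvec$; the Jacobian determinants cancel and the per-branch integrals reassemble, by disjointness of $\dom{X}_i$, into a single integral over $\dom{X}$. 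Identifying the sum $\sum_{\xvec_i\in\preimV{\gvec(\xvec)}} f_\Xvec(\xvec_i)/|\det\Jac{\gvec}{\xvec_i}|$ with $f_\Yvec(\gvec(\xvec))$ via~\eqref{eq:fy} yields~\eqref{eq:loss}.

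The main obstacle I anticipate is purely technical bookkeeping rather than conceptual: ensuring the branch decomposition and change of variables are valid $P_\Xvec$-almost everywhere. The assumption that $|\det\Jac{\gvec}{\cdot}|\neq 0$ holds $P_\Xvec$-a.e., together with the branches $\dom{X}_i$ being disjoint sets of positive measure that cover $\dom{X}$, removes any measure-zero pathologies at branch boundaries; and the countability of $\preimV{\yvec}$ is precisely what lets me treat the inner object as a legitimate discrete entropy so that the standard identity $\ent{\Xvec|\Yvec}=\int f_\Yvec\,\ent{\Xvec|\Yvec=\yvec}\,d\yvec$ is unambiguous.
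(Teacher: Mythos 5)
Your proposal is correct, but it reaches \eqref{eq:loss} by a genuinely different route than the paper. The paper's appendix proof stays inside the partition machinery of Theorem~\ref{thm:defEq}: it writes $\mutinf{\hat{\Xvec}_n;\Xvec}-\mutinf{\hat{\Xvec}_n;\Yvec}$ for each finite partition $\partit{n}$ as a single integral over $\dom{X}$ of a log-ratio involving $f_{\Xvec|\hat{\Xvec}_n}$ and $f_{\Yvec|\hat{\Xvec}_n}$, argues that the integrand is monotonically increasing in $n$ because as the cells shrink it becomes ever less likely that a second preimage point shares a cell with $\xvec$ (so $f_{\Yvec|\hat{\Xvec}_n}(\gvec(\xvec),\hat{\xvec}_k)$ collapses onto the single-branch contribution $f_{\Xvec|\hat{\Xvec}_n}(\xvec,\hat{\xvec}_k)/|\det\Jac{\gvec}{\xvec}|$), and then invokes monotone convergence. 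You bypass the partition sequence entirely: you disintegrate over $\Yvec$, identify the conditional law of $\Xvec$ given $\Yvec=\yvec$ as a discrete distribution on the countable preimage with weights $\bigl(f_\Xvec(\xvec_i)/|\det\Jac{\gvec}{\xvec_i}|\bigr)/f_\Yvec(\yvec)$, take its discrete entropy, and pull the $\dom{Y}$-integral back to $\dom{X}$ branch by branch. Both arguments rest on Theorem~\ref{thm:defEq}; yours additionally uses $\ent{\Xvec|\Yvec}=\int_{\dom{Y}}\ent{\Xvec|\Yvec=\yvec}\,dP_\Yvec(\yvec)$, which the paper itself employs in the proof of Theorem~\ref{thm:infLoss}, so this is within the paper's own toolkit. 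What your route buys is transparency: the explicit conditional pmf makes Theorem~\ref{thm:WisX} and the uniform-distribution bound of Theorem~\ref{thm:bounds} immediate, and your Tonelli step is unproblematic since each summand $-p\log p$ with $p\leq 1$ is non-negative. The one step you should tighten is the identification of the conditional pmf itself: rather than the ``shrinking neighborhood'' heuristic, verify that your candidate kernel reproduces $P_\Xvec(A\cap\gvec^{-1}(B))$ for all measurable $A$ and $B$ -- this follows from exactly the per-branch change of variables you already carry out in the second half of your argument, so no new work is needed, but as stated it is the only point where your proof appeals to intuition rather than to Definition~\ref{def:function}.
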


\ifthenelse{\arxiv=1}{
The proof of this Theorem can be found in the Appendix and, in a modified version for univariate functions, in~\cite{Geiger_ISIT2011arXiv}. Note that for univariate functions the Jacobian determinant is replaced by the derivative of the function.}{The proof of this Theorem can be found in the Appendix of~\cite{Geiger_ILStatic_arXiv} and, in a modified version for univariate functions, in~\cite{Geiger_ISIT2011arXiv}. Note that for univariate functions the Jacobian determinant is replaced by the derivative of the function.}

\begin{cor}\label{cor:diffLoss}
 The information loss induced by a function $\gvec$ satisfying Definition~\ref{def:function} is given as
\begin{equation}
 \ent{\Xvec|\Yvec} = \diffent{\Xvec}-\diffent{\Yvec} + \expec{\log|\det\Jac{\gvec}{\Xvec}|}
\end{equation}
\end{cor}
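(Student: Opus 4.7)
The plan is to start directly from the explicit integral expression in Theorem~\ref{thm:loss} and recognize its structure in terms of the density transformation formula~\eqref{eq:fy}. The crucial observation is that the numerator inside the logarithm, namely $\sum_{\xvec_i \in \preimV{\gvec(\xvec)}} f_\Xvec(\xvec_i)/|\det\Jac{\gvec}{\xvec_i}|$, is by construction equal to $f_\Yvec(\gvec(\xvec))$. After this substitution, the argument of the logarithm becomes $f_\Yvec(\gvec(\xvec))|\det\Jac{\gvec}{\xvec}|/f_\Xvec(\xvec)$.

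Next I would split the logarithm into three additive parts and evaluate the resulting three integrals. The first, $-\int_{\dom{X}} f_\Xvec(\xvec) \log f_\Xvec(\xvec) \,d\xvec$, is by definition $\diffent{\Xvec}$. The third, $\int_{\dom{X}} f_\Xvec(\xvec) \log|\det\Jac{\gvec}{\xvec}|\, d\xvec$, is by definition $\expec{\log|\det\Jac{\gvec}{\Xvec}|}$. The second integral, $\int_{\dom{X}} f_\Xvec(\xvec) \log f_\Yvec(\gvec(\xvec))\, d\xvec$, equals $\expec{\log f_\Yvec(\gvec(\Xvec))} = \expec{\log f_\Yvec(\Yvec)} = -\diffent{\Yvec}$ by the law of the unconscious statistician, since $\Yvec = \gvec(\Xvec)$. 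Combining the three pieces yields the claimed identity.

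I do not foresee any genuine obstacle in this argument; it is essentially a bookkeeping exercise once Theorem~\ref{thm:loss} is in place. The only minor subtlety is to ensure that each of the three integrals exists in its own right so that the splitting is legitimate, which holds whenever the claimed quantities $\diffent{\Xvec}$, $\diffent{\Yvec}$, and $\expec{\log|\det\Jac{\gvec}{\Xvec}|}$ are finite. Under the conditions of Definition~\ref{def:function} the Jacobian determinant is $P_\Xvec$-a.e.\ non-zero, so $\log|\det\Jac{\gvec}{\xvec}|$ is well-defined $P_\Xvec$-a.e., and the change of variables is valid.
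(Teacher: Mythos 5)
Your proposal is correct and follows exactly the paper's own argument: the authors likewise recognize $f_\Yvec(\gvec(\xvec))$ in the numerator of the logarithm in~\eqref{eq:loss} and split the logarithm into the three terms you identify. Your added remark about the integrability needed to justify the splitting is a sensible (and slightly more careful) elaboration of the same proof.
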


\begin{proof}
 The proof is obtained by recognizing the PDF of $\Yvec$ inside the logarithm in~\eqref{eq:loss} and by splitting the logarithm.
\end{proof}

This result is particularily interesting because it provides a link between information loss and differential entropies already anticipated in~\cite[pp.~660]{Papoulis_Probability}. There, it was claimed that
\begin{equation}
 \diffent{\Yvec} \leq \diffent{\Xvec} + \expec{\log|\det\Jac{\gvec}{\Xvec}|}\label{eq:papDiff}
\end{equation}
where equality holds iff $\gvec$ is bijective. While~\eqref{eq:papDiff} is actually another version of the DPI, Corollary~\ref{cor:diffLoss} quantifies how much information is lost by processing. In addition to that, a very similar expression denoted as \emph{folding entropy} has been presented in~\cite{Ruelle_EntropyProduction}, although in a completely different setting analyzing the entropy production of \emph{autonomous} dynamical systems.

We now introduce a discrete RV $W$ which depends on the set $\dom{X}_i$ from which $\Xvec$ was taken. In other words, for all $i$ we have $W=w_i$ iff $\xvec\in\dom{X}_i$. One can interpret this RV as being generated by a vector quantization of $\Xvec$ with a partition $\partit{}=\{\dom{X}_i\}$. With this new RV we can state

\begin{thm}\label{thm:WisX}
 The information loss is identical to the uncertainty about the set $\dom{X}_i$ from which the input was taken, i.e.,
\begin{equation}
 \ent{\Xvec|\Yvec} = \ent{W|\Yvec}.
\end{equation}
\end{thm}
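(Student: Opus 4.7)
The plan is to reduce the identity to a chain-rule decomposition that exploits two structural facts from Definition~\ref{def:function}: first, $W$ is a deterministic function of $\Xvec$ by construction ($W=w_i$ iff $\Xvec\in\dom{X}_i$); second, on each piece $\dom{X}_i$ the map $\gvec_i$ is bijective, so the pair $(W,\Yvec)$ uniquely recovers $\Xvec$ via $\Xvec=\gvec_i^{-1}(\Yvec)$ whenever $W=w_i$. Together these say that $\Xvec$ and $(W,\Yvec)$ contain the same information, which should force the residual conditional entropy $\ent{\Xvec|W,\Yvec}$ to vanish.

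Heuristically I would simply write
\[
\ent{\Xvec|\Yvec} \;=\; \ent{\Xvec,W|\Yvec} \;=\; \ent{W|\Yvec} + \ent{\Xvec|W,\Yvec},
\]
where the first equality uses that $W$ is a function of $\Xvec$ and the second is the standard chain rule. By Theorem~\ref{thm:defEq} the last term equals the information loss of the enlarged system $\Xvec\mapsto(W,\Yvec)$, which is invertible $P_\Xvec$-almost everywhere and hence lossless, so the claim follows.

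Because $\Xvec$ is a continuous RV the chain rule cannot be applied at the continuous level; this is the main technical obstacle, and I would resolve it by going back to Definition~\ref{def:loss}. Let $\hat{\Xvec}$ be the quantization of $\Xvec$ induced by an arbitrary partition of $\dom{X}$. Replacing this partition by its common refinement with $\{\dom{X}_i\}$ can only increase $\ent{\hat{\Xvec}|\Yvec}$, so by the argument in the proof of Theorem~\ref{thm:defEq} the supremum in Definition~\ref{def:loss} may be restricted to partitions that refine $\{\dom{X}_i\}$. For any such refinement, $W$ is itself a function of $\hat{\Xvec}$, and the purely discrete chain rule yields
\[
\ent{\hat{\Xvec}|\Yvec} \;=\; \ent{W|\Yvec} + \ent{\hat{\Xvec}|W,\Yvec}.
\]
Bijectivity of each $\gvec_i$ makes $\Xvec$, and therefore its quantization $\hat{\Xvec}$, a measurable function of $(W,\Yvec)$, so the last term is identically zero. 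Taking the supremum over refinements and invoking Theorem~\ref{thm:defEq} delivers $\ent{\Xvec|\Yvec}=\ent{W|\Yvec}$.
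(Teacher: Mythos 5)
Your argument is correct. Note that the paper does not actually prove Theorem~\ref{thm:WisX} here --- it defers to the univariate proof in~\cite{Geiger_ISIT2011arXiv} --- and the route its own machinery suggests is computational: identify the posterior $\Prob{W=w_i|\Yvec=\yvec}$ with $\bigl(f_\Xvec(\xvec_i)/|\det\Jac{\gvec}{\xvec_i}|\bigr)/f_\Yvec(\yvec)$ for the unique preimage element $\xvec_i\in\dom{X}_i\cap\preimV{\yvec}$, compute $\ent{W|\Yvec=\yvec}$ from these posteriors, and integrate with a change of variables to land exactly on the right-hand side of~\eqref{eq:loss}. You instead argue abstractly: restrict the supremum in Definition~\ref{def:loss} to partitions refining $\{\dom{X}_i\}$ (legitimate, since refining a partition can only increase $\ent{\hat{\Xvec}|\Yvec}$), apply the discrete chain rule $\ent{\hat{\Xvec}|\Yvec}=\ent{W|\Yvec}+\ent{\hat{\Xvec}|W,\Yvec}$, and kill the second term because each $\gvec_i$ is invertible on its piece. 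This is cleaner and slightly more general --- it never touches the density formula~\eqref{eq:fy}, needing only that each $\gvec_i^{-1}$ is measurable --- and it makes transparent the paper's remark after Theorem~\ref{thm:defEq} that the coarse partition $\{\dom{X}_i\}$ already attains the supremum. Two small points are worth stating explicitly if you write this up: (i) measurability of $\gvec_i^{-1}$, which holds here because $\gvec_i$ is a Borel bijection with a.e.\ nonvanishing Jacobian, so $\Xvec$ is almost surely a measurable function of $(W,\Yvec)$ and hence $\ent{\hat{\Xvec}|W,\Yvec}=0$; and (ii) the chain rule for discrete variables conditioned on the continuous $\Yvec$ is to be read pointwise in $\yvec$ under a regular conditional distribution and then integrated, which is unproblematic since all terms are nonnegative (possibly $+\infty$).
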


The proof follows closely the proof provided in~\cite{Geiger_ISIT2011arXiv} and thus is omitted. However, this equivalence suggests a way of measuring information loss by means of proper quantization: Since $\ent{W|\Yvec}=\mutinf{W;\Xvec}-\mutinf{W;\Yvec}$ the loss can be determined by measuring mutual informations, which in this case are always finite (or, at least, bounded by $\ent{W}$). \ifthenelse{\arxiv=1}{In contrary to that, the mutual information in~\eqref{eq:limPart} of Theorem~\ref{thm:defEq} diverge to infinity; This expression was used in~\cite{Geiger_ISIT2011arXiv} for the information loss, highlighting the fact that both the self-information of $\Xvec$ and the information transfer from $\Xvec$ to $\Yvec$ are infinite.}{}

The interpretation derived from Theorem~\ref{thm:WisX} allows us now to provide upper bounds on the information loss:

\begin{thm}\label{thm:bounds}
 The information loss is upper bounded by
\begin{eqnarray}
 \ent{\Xvec|\Yvec} &\leq& \int_{\dom{Y}} f_\Yvec(\yvec) \log|\preimV{\yvec}| d\yvec\\
&\leq& \log\left(\sum_i\int_{\dom{Y}_i}f_\Yvec(\yvec)d\yvec\right)\\
&\leq& \max_{\yvec}\log|\preimV{\yvec}|.
\end{eqnarray}
\end{thm}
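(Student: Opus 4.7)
The plan is to start by invoking Theorem~\ref{thm:WisX} and replace $\ent{\Xvec|\Yvec}$ by $\ent{W|\Yvec}$, thereby reducing the problem from a potentially divergent continuous conditional entropy to a discrete one that is easy to bound.

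For the first inequality, I would fix $\yvec\in\dom{Y}$ and note that, conditioned on $\Yvec=\yvec$, the discrete RV $W$ can only take values $w_i$ for which $\yvec\in\dom{Y}_i$. Since each $\gvec_i$ is bijective from $\dom{X}_i$ onto $\dom{Y}_i$, the number of such indices equals the cardinality of the preimage $|\preimV{\yvec}|$. The maximum-entropy property of the uniform distribution on a finite set then yields $\ent{W|\Yvec=\yvec}\leq \log|\preimV{\yvec}|$. Integrating against $f_\Yvec$ and using $\ent{W|\Yvec}=\int f_\Yvec(\yvec)\ent{W|\Yvec=\yvec}d\yvec$ gives the first line.

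For the second inequality, I would apply Jensen's inequality to the concave function $\log$, which pulls the $\log$ outside the integral against the probability density $f_\Yvec$. The remaining step is to recognize the identity
\begin{equation}
|\preimV{\yvec}| \;=\; \sum_i \mathbf{1}_{\dom{Y}_i}(\yvec),
\end{equation}
which, once substituted into $\int f_\Yvec(\yvec)|\preimV{\yvec}|d\yvec$, yields exactly $\sum_i \int_{\dom{Y}_i} f_\Yvec(\yvec)d\yvec$ after swapping integral and sum. For the third inequality, I would simply bound the integrand in the second expression pointwise by $\max_\yvec|\preimV{\yvec}|$, use that $f_\Yvec$ integrates to one, and pull the maximum outside the monotone $\log$.

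None of the three steps looks genuinely hard; the subtlest point is the counting argument in step one, namely that the support size of $W|\Yvec=\yvec$ coincides with $|\preimV{\yvec}|$, which uses both the bijectivity of each branch $\gvec_i$ and the disjointness of the $\dom{X}_i$. The rest is the standard two-step weakening, maximum entropy followed by Jensen, that trades tightness for a simpler and distribution-free bound.
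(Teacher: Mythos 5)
Your proposal is correct and follows essentially the same route as the paper, which only sketches the argument: reduce to $\ent{W|\Yvec}$ via Theorem~\ref{thm:WisX}, bound $\ent{W|\Yvec=\yvec}$ by the entropy of a uniform distribution on the preimage, apply Jensen, and then replace the cardinality by its maximum. Your write-up actually supplies the two details the paper leaves implicit --- that the support of $W$ given $\Yvec=\yvec$ has exactly $|\preimV{\yvec}|$ elements (by bijectivity of the $\gvec_i$ and disjointness of the $\dom{X}_i$), and the identity $|\preimV{\yvec}|=\sum_i \mathbf{1}_{\dom{Y}_i}(\yvec)$ that turns the Jensen bound into the stated sum --- so nothing is missing.
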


\begin{proof}
 We give here only a sketch of the proof: The first inequality results from bounding $\ent{W|\Yvec=\yvec}$ by the entropy of a uniform distribution on the preimage of $\yvec$. Jensen's inequality yields the second line of the Theorem. The coarsest bound is obtained by replacing the cardinality of the preimage by its maximal value.
\end{proof}

In this Theorem, we bounded the information loss given a certain output by the cardinality of the preimage. While the first bound considers the fact that the cardinality may actually depend on the output itself, the last bound incorporates the maximum cardinality only. In cases where the function from Definition~\ref{def:function} is defined not on a countable but on a finite number of subdomains this finite number can act as an upper bound (cf.~\cite{Geiger_ISIT2011arXiv}). Another straightforward upper bound, which is independent from the bounds in Theorem~\ref{thm:bounds} is obtained from Theorem~\ref{thm:WisX} by removing conditioning: 
\begin{equation}
 \ent{\Xvec|\Yvec}\leq\ent{W}=-\sum_i p_i\log p_i
\end{equation}
where $p_i=P_\Xvec(\dom{X}_i)=\int_{\dom{X}_i} f_\Xvec(\xvec)d\xvec$. It has to be noted, though, that depending on the function $\gvec$ all these bounds can be infinite while the information loss remains finite.

A further implication of introducing this discrete RV $W$ is that it allows us to perform investigations about reconstructing the input from the output. Currently, a Fano-type inequality bounding the reconstruction error by the information loss is under development. In addition to that, new upper bounds on the information loss related to the reconstruction error of optimal (in the \emph{maximum a posteriori} sense) and of simpler, sub-optimal estimators are analyzed.

\section{Functions with Infinite Information Loss}
\label{sec:extension}
We now drop the requirement of local bijectivity in Definition~\ref{def:function} to analyze a wider class of surjective, Borel-measurable functions $\gvec{:}\ \dom{X}\to\dom{Y}$. We keep the requirement that $P_\Xvec \ll \mu$ and thus $\Xvec$ possesses a density $f_\Xvec$ (positive on $\dom{X}$ and zero elsewhere). We maintain
\begin{thm}\label{thm:infLoss}
 Let $\gvec{:}\ \dom{X}\to\dom{Y}$ be a Borel-measurable function and let the continuous RV $\Xvec$ be the input to this function. If there exists a set $B\subseteq\dom{Y}$ of positive $P_\Yvec$-measure such that the preimage $\preimV{\yvec}$ is uncountable for every $\yvec\in B$, then the information loss is infinite.
\end{thm}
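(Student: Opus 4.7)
By Theorem~\ref{thm:defEq} the assertion is equivalent to $\ent{\Xvec|\Yvec}=\infty$. Via disintegration of $P_\Xvec$ through $\gvec$, the identity $\ent{\Xvec|\Yvec}=\int \ent{\Xvec|\Yvec=\yvec}\,dP_\Yvec(\yvec)$ holds, obtained by applying monotone convergence to the supremum in Definition~\ref{def:loss} exactly as in the proof of Theorem~\ref{thm:defEq}. Since $P_\Yvec(B)>0$, the plan is to exhibit a Borel subset $B'\subseteq B$ with $P_\Yvec(B')>0$ on which $\ent{\Xvec|\Yvec=\yvec}=\infty$. The key observation is that, in the partition-supremum sense of Definition~\ref{def:loss}, the entropy of a probability measure $\mu$ on $\mathbb{R}^N$ is finite if and only if $\mu$ is purely atomic with summable $-\sum_i p_i\log p_i$; any non-atomic component forces infinite entropy under ever finer partitions. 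The proof therefore reduces to ruling out that $P_{\Xvec|\Yvec}(\cdot|\yvec)$ is purely atomic with finite Shannon entropy on any positive-$P_\Yvec$-measure subset of $B$.

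I would argue this by contradiction. Assume $P_{\Xvec|\Yvec}(\cdot|\yvec)$ is purely atomic for every $\yvec$ in some Borel $B'\subseteq B$ with $P_\Yvec(B')>0$. The Kuratowski--Ryll-Nardzewski measurable selection theorem enumerates the atoms by Borel sections $c_i\colon B'\to\dom{X}$ satisfying $\gvec\circ c_i=\mathrm{id}_{B'}$, with atomic masses $p_i(\yvec)>0$ summing to $1$. The disintegration identity then gives $P_\Xvec\bigl(\bigcup_i c_i(B')\bigr)=P_\Yvec(B')>0$, and $P_\Xvec\ll\mu$ together with $f_\Xvec>0$ on $\dom{X}$ concentrates the full $P_\Xvec$-mass of $\preimV{B'}$ on $\bigcup_i c_i(B')$ and forces the complement $\preimV{B'}\setminus\bigcup_i c_i(B')$ to be Lebesgue-null.

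The hard part, and the main obstacle, is closing the contradiction: this Lebesgue-null residual meets every fiber $\preimV{\yvec}$, $\yvec\in B'$, in an uncountable set (removing the countable collection of selected points from an uncountable fiber), and one has to argue that this geometric configuration is incompatible with $P_\Xvec\ll\mu$ on the ambient $N$-dimensional space. The cleanest finish I can see is to invoke a Rokhlin-type disintegration theorem for standard Borel spaces, which asserts that a non-atomic source measure whose disintegrations are purely atomic must have at-most-countable fibers $P_\Yvec$-almost everywhere; this flatly contradicts our hypothesis that $\preimV{\yvec}$ is uncountable throughout $B$. A more self-contained alternative sidesteps the contradiction altogether by constructing, for every $k\in\mathbb{N}$, a partition of $\dom{X}$ that separates $k$ Borel sections of $\gvec$ over $B$ (these exist because the fibers are uncountable Borel subsets of $\mathbb{R}^N$) and then verifying that the induced quantizer $\Xhvec$ satisfies $\ent{\Xhvec|\Yvec}\geq P_\Yvec(B)\log k$; letting $k\to\infty$ in the supremum of Definition~\ref{def:loss} yields $L(\Xvec\to\Yvec)=\infty$ directly.
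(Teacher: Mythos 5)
Your overall route coincides with the paper's: write $\ent{\Xvec|\Yvec}$ as $\int_{B}\ent{\Xvec|\Yvec=\yvec}\,dP_\Yvec(\yvec)$ plus a nonnegative remainder, and use the Pinsker-type criterion that a fiber entropy $\ent{\Xvec|\Yvec=\yvec}$ is finite only if the conditional law $P_{\Xvec|\Yvec}(\cdot|\yvec)$ is purely atomic with finite $-\sum_i p_i\log p_i$. You are in fact more careful than the paper, which passes directly from ``$\preimV{\yvec}$ is uncountable'' to ``$\ent{\Xvec|\Yvec=\yvec}=\infty$'' by citation; you correctly isolate the missing step, namely ruling out purely atomic conditionals. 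The problem is that neither of your two proposed ways of closing that step works, so the proof is not complete.

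First, the ``Rokhlin-type'' statement you invoke --- a non-atomic $P_\Xvec$ whose disintegrations are purely atomic must have at-most-countable fibers $P_\Yvec$-a.e. --- is false for \emph{set-theoretic} fibers: atomicity of the conditionals constrains only their supports, not the full preimages. Concretely, take $\dom{X}=[0,1]$ with Lebesgue $P_\Xvec$, let $C$ be the Cantor set, fix a Borel isomorphism $\phi$ of $C$ onto $[0,1]^2$, and set $\gvec(x)=x$ off $C$ and $\gvec(x)=\pi_1(\phi(x))$ on $C$. Then $P_\Yvec$ is Lebesgue measure, every fiber $\preimV{y}$ is uncountable, yet $P_{\Xvec|\Yvec}(\cdot|y)$ is the point mass at $y$ for a.e.\ $y$. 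The same configuration defeats the ``geometric incompatibility'' you hope for: a Lebesgue-null residual can perfectly well meet every fiber in an uncountable set, so no contradiction with $P_\Xvec\ll\mu$ is available from these facts alone. Second, the self-contained alternative fails at the inequality $\ent{\Xhvec|\Yvec}\geq P_\Yvec(B)\log k$: separating $k$ sections into $k$ partition cells lower-bounds the quantized conditional entropy by $\log k$ only if the conditional law spreads its mass roughly evenly over those cells; if it is concentrated on a single section (as above), $\ent{\Xhvec|\Yvec=\yvec}=0$ for every $k$. (There is also a secondary issue in asserting the existence of $k$ disjoint \emph{Borel} sections over $B$ from uncountability of the fibers alone.) So the step you flagged as ``the hard part'' is genuinely open in your write-up: any correct argument must use information about where the conditional measures put their mass --- e.g., non-atomicity of $P_{\Xvec|\Yvec}(\cdot|\yvec)$, which does hold for projections onto lower-dimensional subspaces --- and not merely the cardinality of the set-theoretic preimage. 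This is precisely the implication the paper's own proof disposes of by a bare citation.
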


\begin{proof}
We notice that since $B\subseteq\dom{Y}$
\begin{IEEEeqnarray}{RCL}
 \ent{\Xvec|\Yvec}\ifthenelse{\arxiv=1}{&=&\int_{\dom{Y}}\ent{\Xvec|\Yvec=\yvec}dP_\Yvec(\yvec)\\}{}
&\geq& \int_{B}\ent{\Xvec|\Yvec=\yvec}dP_\Yvec(\yvec)
\end{IEEEeqnarray}
where the integral\ifthenelse{\arxiv=1}{s are}{ is} now written as Lebesgue integral\ifthenelse{\arxiv=1}{s}{}, since $P_\Yvec$ now not necessarily possesses a density.

Since on $B$ the preimage of every element is uncountable, we obtain with~\cite{Pinsker_InfoEngl} and the references therein $ \ent{\Xvec|\Yvec=\yvec}=\infty$ for all $\yvec\in B$, and, thus, $\ent{\Xvec|\Yvec}=\infty$.
\end{proof}

Note that the requirement of $B$ being a set of positive $P_\Yvec$-measure cannot be dropped, \ifthenelse{\arxiv=1}{as Example 4 in Section~\ref{sec:examples} illustrates.}{as one of the examples provided in~\cite{Geiger_ILStatic_arXiv} illustrates.} We immediately obtain the following

\begin{cor}
 Let $\gvec{:}\ \dom{X}\to\dom{Y}$ be a Borel-measurable function and let the continuous RV $\Xvec$ be the input to this function. If the probability measure of the output, $P_\Yvec$, possesses a non-vanishing discrete component, the information loss is infinite.
\end{cor}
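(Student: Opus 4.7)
The plan is to check the hypothesis of Theorem~\ref{thm:infLoss}, i.e., to produce a set $B\subseteq\dom{Y}$ of positive $P_\Yvec$-measure on which every preimage is uncountable, and then simply invoke that theorem.

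First, I would extract a single useful atom from the discrete component. By assumption the discrete part of $P_\Yvec$ does not vanish, so there exists at least one point $\yvec_0\in\dom{Y}$ with $p_0 := P_\Yvec(\{\yvec_0\}) > 0$. Setting $B := \{\yvec_0\}$, we have $P_\Yvec(B) = p_0 > 0$, so the measure-theoretic half of the hypothesis is immediate.

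Second, I would show $\preimV{\yvec_0}$ is uncountable. Since $\gvec$ is Borel-measurable, $\preimV{\yvec_0}$ is a Borel subset of $\dom{X}$, and
\[
P_\Xvec\!\left(\preimV{\yvec_0}\right) \;=\; P_\Yvec(\{\yvec_0\}) \;=\; p_0 \;>\; 0.
\]
The assumption $P_\Xvec\ll\mu$ means every $\mu$-null set is $P_\Xvec$-null; taking contrapositives, $P_\Xvec(\preimV{\yvec_0})>0$ forces $\mu(\preimV{\yvec_0})>0$. Since any countable subset of $\mathbb{R}^N$ has Lebesgue measure zero, $\preimV{\yvec_0}$ must be uncountable. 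Thus the preimage condition holds for every $\yvec\in B$ (trivially, as $B$ is a singleton), and Theorem~\ref{thm:infLoss} delivers $\ent{\Xvec|\Yvec}=\infty$.

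There is no real obstacle here: the proof is a short chain of standard measure-theoretic facts, and the only subtle point is recognizing that absolute continuity of $P_\Xvec$ with respect to Lebesgue measure is exactly the hinge that promotes the discreteness of a single atom on the $Y$-side into uncountability of a preimage on the $X$-side. One could alternatively take $B$ to be the full (countable) set of atoms rather than a single one, but this yields no stronger conclusion, so the cleanest presentation picks one atom.
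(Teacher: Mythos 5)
Your proposal is correct and follows essentially the same route as the paper: both isolate a single atom $\yvec_0$ of the discrete component, use $P_\Xvec(\preimV{\yvec_0})=P_\Yvec(\{\yvec_0\})>0$ together with $P_\Xvec\ll\mu$ to conclude the preimage is uncountable, and then invoke Theorem~\ref{thm:infLoss} with the singleton $B=\{\yvec_0\}$. The paper merely frames the existence of the atom via the Lebesgue--Radon--Nikodym decomposition, which adds nothing essential beyond your argument.
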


\begin{proof}
 According to the Lebesgue-Radon-Nikodym theorem~\cite[pp.~121]{Rudin_Analysis3} every measure can be decomposed in a component absolutely continuous w.r.t. $\mu$ and a component singular to $\mu$. The latter part can further be decomposed into a singular continuous and a discrete part, where the latter places positive $P_\Yvec$-mass on points. Let $\yvec^*$ be such a point, i.e., $P_\Yvec(\yvec^*)>0$. As an immediate consequence, $P_\Xvec(\preimV{\yvec^*})>0$, which is only possible if $\preimV{\yvec^*}$ is uncountable ($P_\Xvec \ll \mu$).
\end{proof}

This result is also in accordance with intuition, as the analysis of a simple quantizer shows: While the entropy of the input RV is infinite ($\mutinf{\hat{\Xvec};\Xvec}\to\infty$ for $\hat{\Xvec}\to\Xvec$; cf.~\cite[pp.~654]{Papoulis_Probability}), the quantized output can contain only a finite amount of information ($\mutinf{\hat{\Xvec};\Yvec}\to\ent{\Yvec}<\infty$). In addition to that, the preimage of each possible output value $\yvec$ is a set of positive $P_\Xvec$-measure. The loss, as a consequence, is infinite.

While for the quantizer the preimage of each possible output value is a set of positive measure, there certainly are functions for which some outputs have a countable preimage and some whose preimage is a non-null set. An example of such a system is the limiter~\cite[Ex.~5-4]{Papoulis_Probability}. For such systems it can be shown that both the information loss $L(\Xvec\to \Yvec)=\ent{\Xvec|\Yvec}$ and the information transfer $\mutinf{\Xvec;\Yvec}$ are infinite.

Finally, there exist functions $\gvec$ for which the preimages of all output values $\yvec$ are null sets, but which still fulfill the conditions of Theorem~\ref{thm:infLoss}. Functions which project $\dom{X}$ on a lower-dimensional subspace of $\mathbb{R}^N$ fall into that category.

\ifthenelse{\arxiv=1}{
\section{Examples}
\label{sec:examples}

In this Section we illustrate our theoretical results with the help of examples. The logarithm is taken to base 2 unless otherwise noted.
\subsection{Example 1: A two-dimensional transform with finite information loss}
Let $\Xvec$ be uniformly distributed on the square $\dom{X}=[-a,a]\times [-a,a]$. Equivalently, the two constituing RVs $X_1$ and $X_2$ are independent and uniformly distributed on $[-a,a]$. In other words, while $f_\Xvec(\xvec)=1/4a^2$ for all $\xvec\in\dom{X}$, we have $f_X(x_i)=1/2a$ for $x_i\in[-a,a]$ and $i=1,2$.

We consider a function $\gvec$ performing the mapping:
\begin{eqnarray}
 Y_1&=&X_1\\ Y_2 &=& |X_1-X_2|
\end{eqnarray}
The corresponding Jacobian matrix is a triangular matrix
\begin{equation}
 \Jac{\gvec}{\xvec} = \left[\begin{array}{cc}
1 & 0 \\ \sgn{x_1-x_2} & \sgn{x_2-x_1}
\end{array}\right]
\end{equation}
where $\sgn{\cdot}$ is the sign-function. From this immediately follows that the magnitude of the determinant of the Jacobian matrix is unity for all possible values of $\Xvec$, i.e., $|\det\Jac{\gvec}{\xvec}|=1$ for all $\xvec\in\dom{X}$. The subsets of $\dom{X}$ on which the partitioned functions $\gvec_i$ are bijective are no intervals in this case; they are the triangular halves of the square induced by $x_1=x_2$ (see Fig.~\ref{fig:domainsEx1})
\begin{eqnarray}
 \dom{X}_1 &=& \{[x_1,x_2]\in\dom{X}:\ x_1> x_2\}\\
 \dom{X}_2 &=& \{[x_1,x_2]\in\dom{X}:\ x_1\leq x_2\}.
\end{eqnarray}
The preimage of $\gvec(\xvec)$ is, in any case,
\begin{equation}
 \{[x_1,x_2],[x_1,2x_1-x_2]\}\cap \dom{X}.
\end{equation}
The transform $\gvec$ is bijective whenever $[x_1,2x_1-x_2]\notin\dom{X}$, i.e., if $|2x_1-x_2|>a$.

With the PDF of $\Xvec$ and of its components we obtain for the information loss
\begin{equation}
 \ent{\Xvec|\Yvec} = \int_{-a}^a\int_{-a}^a \frac{1}{4a^2} \log \left(\frac{\frac{1}{2a}+f_X(2x_1-x_2)}{\frac{1}{2a}}\right) dx_1 dx_2
\end{equation}
which is non-zero only if $-a \leq 2x_1-x_2 \leq a$ (numerator and denominator cancel otherwise; no loss occurs in the bijective domain of the function). As a consequence,
\begin{eqnarray}
 \ent{\Xvec|\Yvec} &=& \int_{-a}^a \int_{\frac{x_2-a}{2}}^{\frac{x_2+a}{2}} \frac{\log 2}{4a^2} dx_1dx_2\\
 &=& \int_{-a}^a \frac{1}{4a} dx_2 = \frac{1}{2}.
\end{eqnarray}
The information loss is identical to a half bit. This is intuitive when looking at Fig.~\ref{fig:domainsEx1}, where it can be seen that any information loss occurs only on one half of the domain $\dom{X}$ (shaded in stronger colors). By destroying the sign information, in this area the information loss is equal to one bit.
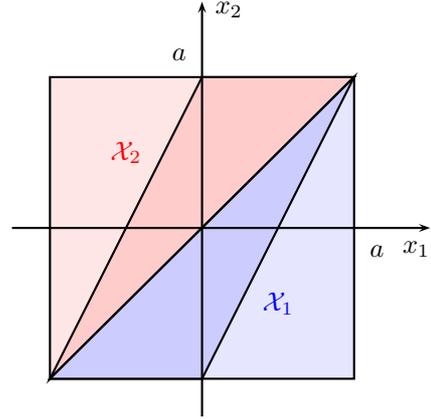
\begin{figure}
  \begin{center}
 \begin{pspicture}[showgrid=false](-3,-2.5)(3,3)
	\pspolygon[fillstyle=solid,fillcolor=red!10](-2,-2)(-2,2)(2,2)
	\pspolygon[fillstyle=solid,fillcolor=red!20](-2,-2)(0,2)(2,2)
	\pspolygon[fillstyle=solid,fillcolor=blue!10](-2,-2)(2,-2)(2,2)
	\pspolygon[fillstyle=solid,fillcolor=blue!20](-2,-2)(0,-2)(2,2)
	\psaxeslabels{->}(0,0)(-2.5,-2.5)(3,3){$x_1$}{$x_2$}
	\rput(2.3,-0.3){$a$} \rput(-0.3,2.3){$a$}
	\rput(-1,1){\textcolor{red}{$\dom{X}_2$}}\rput(1,-1){\textcolor{blue}{$\dom{X}_1$}}
 \end{pspicture}
\end{center}
\caption{Subdomains of Example 1. The partitioned functions $\gvec_i$ restricted to the domain of either color are bijective. Furthermore, the overall function $\gvec$ is bijective in areas with light shading.}
\label{fig:domainsEx1}
\end{figure}

\subsection{Example 2: Squaring a Gaussian RV}
Let $X$ be a zero-mean Gaussian RV with unit variance and differential entropy $\diffent{X}=\frac{1}{2}\ln(2\pi e)$ measured in nats. We consider the square of this RV, $Y=g(X)=X^2$, to illustrate the connection between information loss and differential entropy. The square of a Gaussian RV, $Y$, is $\chi^2$-distributed with one degree of freedom. Thus, the differential entropy of $Y$ is given by~\cite{Lazo_Entropies} \begin{IEEEeqnarray}{RCL}
 \diffent{Y} &=& \frac{1}{2}+\ln\left(2\Gamma\left(\frac{1}{2}\right)\right)+ \frac{1}{2}\psi\left(\frac{1}{2}\right)\\&=&\frac{1}{2}+\frac{1}{2}\ln\pi-\frac{\gamma}{2}
\end{IEEEeqnarray}
where $\Gamma(\cdot)$ and $\psi(\cdot)$ are the gamma- and digamma-functions~\cite[Ch.~6]{Abramowitz_Handbook} and $\gamma$ is the Euler-Mascheroni constant~\cite[pp.~3]{Abramowitz_Handbook}. With some calculus we obtain for the expected value of the derivative (taking the place of the Jacobian determinant in the univariate case)
\begin{equation}
 \expec{\ln|2x|} = \frac{1}{2}\ln 2 - \frac{\gamma}{2}.
\end{equation}
Subtracting differential entropies and adding the expected value of the derivative yields the information loss
\begin{eqnarray}
 \ent{X|Y} &=& \diffent{X}-\diffent{Y}+\expec{\ln|2x|}\\
&=& \frac{1}{2}\ln(2\pi e) - \frac{1}{2} - \frac{1}{2}\ln(\pi)  + \frac{1}{2}\ln 2 \\
&=& \ln2
\end{eqnarray}
again measured in nats. Changing the base of the logarithm to 2 we obtain an information loss of one bit. This is in perfect accordance with a previous result showing that the information loss of a square-law device is equal to one bit if the PDF of the input has even symmetry~\cite{Geiger_ISIT2011arXiv}.

\subsection{Example 3: Exponential RV and infinite bounds}
In this example we consider an exponential input with PDF
\begin{equation}
 f_X(x)=\lambda\e{-\lambda x}
\end{equation}
and a piecewise linear function
\begin{equation}
 g(x)=x-\frac{\lfloor \lambda x \rfloor}{\lambda}.
\end{equation}
The PDF and the function are depicted in Fig.~\ref{fig:functionsEx3}.

\begin{figure}
  \begin{center}
 \begin{pspicture}[showgrid=false](-1,-3.5)(5,2.5)
	\psaxeslabels{->}(0,0)(-0.5,-0.5)(5,2){$x$}{$f_X(x)$}
  \psplot[style=Graph,linecolor=red,plotpoints=500]{0}{4.5}{2.718 x 1.5 mul neg exp 1.5 mul}
	\psTick{0}(0,1.5) \rput(-0.5,1.5){$\lambda$}
	\psaxeslabels{->}(0,-3)(-0.5,-3.5)(5,-1){$x$}{$g(x)$}
  \psplot[style=Graph,linecolor=black,plotpoints=500]{0}{1}{x 3 sub}
	\psplot[style=Graph,linecolor=black,plotpoints=500]{1}{2}{x 4 sub}
	\psplot[style=Graph,linecolor=black,plotpoints=500]{2}{3}{x 5 sub}
	\psplot[style=Graph,linecolor=black,plotpoints=500]{3}{4}{x 6 sub}
	\psplot[style=Graph,linecolor=black,plotpoints=500]{4}{4.5}{x 7 sub}
	\psTick{0}(0,-2) \rput(-0.5,-2){$\frac{1}{\lambda}$}
	\psTick{90}(1,-3) \rput(1,-3.5){$\frac{1}{\lambda}$}
 \end{pspicture}
\end{center}
\caption{PDF $f_X$ and piecewise linear function $g$ of Example 3.}
\label{fig:functionsEx3}
\end{figure}
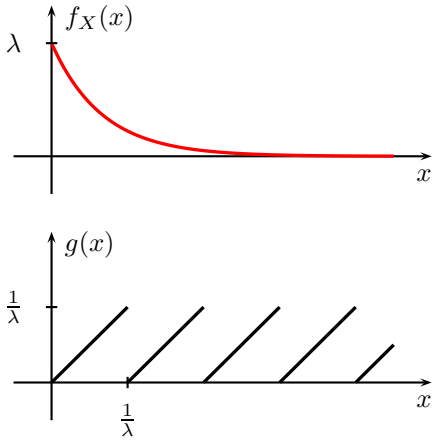

We obviously have $\dom{X}=[0,\infty)$ and $\dom{Y}=[0,\frac{1}{\lambda})$, while $g$ partitions $\dom{X}$ in a countable number of intervals of length $\frac{1}{\lambda}$. In other words,
\begin{equation}
 \dom{X}_k = \left[\frac{k-1}{\lambda},\frac{k}{\lambda} \right)
\end{equation}
and $g(\dom{X}_k)=\dom{Y}$ for all $k=1,2,\dots$. From this follows that for every $y\in\dom{Y}$ the preimage contains an element from each subdomain $\dom{X}_k$; thus, the bounds from Theorem~\ref{thm:bounds} all evaluate to $\ent{X|Y}\leq\infty$. However, it can be shown that the other bound, $\ent{X|Y}\leq\ent{W}$ is tight in this case:
With 
\begin{equation}
 p_k=P_X(\dom{X}_k)=\int_{\dom{X}_k}f_X(x)dx = (1-\e{-1})\e{-k+1}
\end{equation}
we obtain $\ent{W}=-\log(1-\e{-1})+\frac{\e{-1}}{1-\e{-1}}\approx 1.24$. The same result is obtained for a direct evaluation of Theorem~\ref{thm:loss}.

\subsection{Example 4: An almost invertible transform with zero information loss}
As a next example consider a two-dimensional RV $\Xvec$ which places probability mass uniformly on the unit disc, i.e.,
\begin{equation}
 f_\Xvec(\xvec)=\begin{cases}
                 \frac{1}{\pi}, &\text{ if } ||\xvec||\leq 1\\ 0, &\text{ else}
                \end{cases}
\end{equation}
where $||\cdot||$ is the Euclidean norm. Thus, $\dom{X}=\{\xvec\in\mathbb{R}^2: ||\xvec||\leq 1\}$. The cartesian coordinates $\xvec$ are now transformed to polar coordinates in a special way, namely:
\begin{IEEEeqnarray}{RCL}
 y_1&=&\begin{cases}
      ||\xvec||, &\text{ if }||\xvec||<1\\ 0,&\text{ else}     \end{cases}\\
y_2&=&\begin{cases}
       \arctan(\frac{x_2}{x_1})+\pi(1-\sgn{x_1}),&\text{ if } 0<||\xvec||<1\\
0,&\text{ else}
      \end{cases}\notag\\
\end{IEEEeqnarray}
This mapping together with the domains of $\Xvec$ and $\Yvec$ is illustrated in Fig.~\ref{fig:mappingEx4} (left and upper right diagram).
\begin{figure}
  \begin{center}
 \begin{pspicture}[showgrid=false](-4,-3.5)(4,2.5)
  \psring[fillcolor=red!10](-2,0){0}{1.5}
	\pscircle[linecolor=red](-2,0){1.5}
	\psaxeslabels{->}(-2,0)(-3.8,-1.8)(-0.2,1.8){$x_1$}{$x_2$}
  \psTick{0}(-2,1.5) \rput(-2.2,1.7){$1$}
	
	\psframe*[linecolor=red!10,](1.5,-1.)(3.5,2)
	\psaxeslabels{->}(1.5,-1)(1.2,-1.3)(4,2.3){$y_1$}{$y_2$}
	\psring[fillcolor=red](1.5,-1){0}{0.1}
	\psTick{0}(1.5,2) \rput(1.2,2){$2\pi$}
	\psTick{90}(3.5,-1) \rput(3.5,-1.3){$1$}

	\psaxeslabels{->}(1.5,-2.5)(1.2,-2.8)(4,-1.8){$y_1$}{$y_2$}
	\psline[linecolor=red,linewidth=1pt](1.5,-2.5)(3.5,-2.5)
	\psTick{90}(3.5,-2.5) \rput(3.5,-2.8){$1$}
 \end{pspicture}
\end{center}
\caption{Mapping of domains in Examples 4 and 5. The solid red circle in the left diagram and the red dot in the upper right diagram correspond to each other, illustrating the mapping of an uncountable $P_\Xvec$-null set to a point. The lightly shaded areas are mapped bijectively in Example 4. In Example 5, the disc in the left diagram is mapped to the solid red line in the lower right diagram.}
\label{fig:mappingEx4}
\end{figure}

As a direct consequence we have $\dom{Y}=(0,1) \times [0,2\pi)\cup\{0,0\}$. Observe that not only the point $\xvec=\{0,0\}$ is mapped to the point $\yvec=\{0,0\}$, but that also the unit circle $\dom{S}=\{\xvec:||\xvec||=1\}$ is mapped to $\yvec=\{0,0\}$. As a consequence, the preimage of $\{0,0\}$ under $\gvec$ is uncountable. However, since a circle in $\mathbb{R}^2$ is a Lebesgue null-set and thus $P_\Xvec(\dom{S})=0$, also $P_\Yvec(\{0,0\})=0$ and the conditions of Theorem~\ref{thm:infLoss} are not met. Indeed, since $\ent{\Xvec|\Yvec=\yvec}=0$ $P_\Yvec$-almost everywhere, it can be shown that $\ent{\Xvec|\Yvec}=0$.

\subsection{Example 5: A mapping to a subspace of lower dimensionality}
Consider again a uniform distribution on the unit disc, as it was used in Example 4. Now, however, let $\gvec$ be such that only the radius is computed while the angle is lost, i.e.,
\begin{IEEEeqnarray}{RCL}
 y_1&=& ||\xvec||\\
y_2&=&0.
\end{IEEEeqnarray}
Note that here only the origin $\{0,0\}$ is mapped bijectively, while for all other $\yvec\in\dom{Y}=[0,1]\times \{0\}$ the preimage under $\gvec$ is uncountable (a circle around the origin with radius $y_1$). Indeed, in this particular example, the probability measure $P_\Yvec$ is \emph{not} discrete, but singular continuous: Each point has zero $P_\Yvec$-measure (circles are Lebesgue null-sets), but $P_\Yvec$ is not absolutely continuous w.r.t. the two-dimensional Lebesgue measure $\mu$. Clearly, $\mu(\dom{Y})=0$ while $P_\Yvec(\dom{Y})=1$. Since the preimage is uncountable on a set of positive $P_\Yvec$-measure, we have $\ent{\Xvec|\Yvec}=\infty$.

\subsection{Example 6: Another two-dimensional transform with finite information loss}
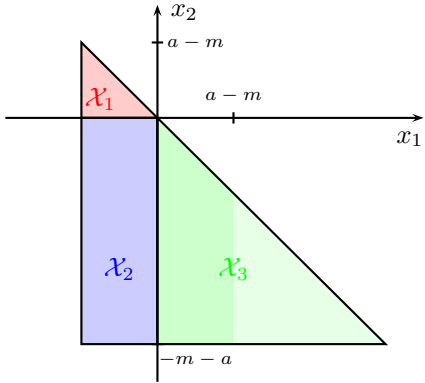
\begin{figure}
  \begin{center}
 \begin{pspicture}[showgrid=false](-3,-3)(3,3)
	\pspolygon[fillstyle=solid,fillcolor=red!20](-2,2)(-2,1)(-1,1)
	\pspolygon[fillstyle=solid,fillcolor=blue!20](-2,1)(-1,1)(-1,-2)(-2,-2)
	\pspolygon*[linecolor=green!10](-1,1)(-1,-2)(2,-2)
	\pspolygon*[linecolor=green!20](-1,1)(-1,-2)(0,-2)(0,0)
 	\pspolygon(-1,1)(-1,-2)(2,-2)
	\psaxeslabels{->}(-1,1)(-3,-2.5)(2.5,2.5){$x_1$}{$x_2$}
	\psTick{0}(-1,2) \rput(-0.5,2){\scriptsize $a-m$}
	\psTick{90}(0,1) \rput(0,1.3){\scriptsize $a-m$}
	\psTick{0}(-1,-2) \rput(-0.5,-2.2){\scriptsize $-m-a$}
	\rput(-1.75,1.25){\textcolor{red}{$\dom{X}_1$}}\rput(-1.5,-1){\textcolor{blue}{$\dom{X}_2$}}\rput(0,-1){\textcolor{green}{$\dom{X}_3$}}
 \end{pspicture}
\end{center}
\caption{Subdomains of Example 6. The functions $\gvec_i$ restricted to a domain of either color are bijective. Furthermore, the overall function $\gvec$ is bijective in areas with light shading.}
\label{fig:domainsEx6}
\end{figure}
Finally, consider a uniform distribution on a triangle defined by
\begin{equation}
 \dom{X}=\{\xvec\in\mathbb{R}^2: x_1\in[m-a,m+a],x_2\in[-m-a,-x_1]\}
\end{equation}
where $0\leq m \leq a$ and $a>0$. Thus, the PDF of $\Xvec$ is given as $f_\Xvec(\xvec)=\frac{1}{2a^2}$ if $\xvec\in\dom{X}$ and zero elsewhere (see Fig.~\ref{fig:domainsEx6}). The function $\gvec$ takes the magnitude of each coordinate, i.e., $y_i=|x_i|$, where $i=1,2$. We now try to derive the information loss as a function of $m$.

First we can identify three subsets of $\dom{X}$ which are mapped bijectively by restricting $\gvec$ to these sets, namely $\dom{X}_1=\{\xvec\in\dom{X}: x_1\leq 0, x_2\geq 0\}$, $\dom{X}_2=\{\xvec\in\dom{X}: x_1\leq 0, x_2<0\}$, and $\dom{X}_3=\{\xvec\in\dom{X}:x_1>0,x_2<0\}$. Furthermore, for $m\geq 0$ a part of $\dom{X}_3$ is mapped bijectively by $\gvec$ (lighter shading in Fig.~\ref{fig:domainsEx6}). The probability mass contained in this subset $\dom{X}_b$ can be shown to equal $P_b=P_\Xvec(\dom{X}_b)=\frac{m^2}{a^2}$. For all other possible input values $\xvec$ the preimage of $\gvec(\xvec)$ has exactly two elements: One of them is located in $\dom{X}_2$, the other either in $\dom{X}_1$ or in $\dom{X}_3\setminus\dom{X}_b$. Due to the uniformity of $\Xvec$ and since the Jacobian determinant is identical to unity for all $\xvec\in\dom{X}$ both of these preimages are equally likely. Thus, on $\dom{X}\setminus\dom{X}_b$ the information loss is identical to one bit. In other words,
\begin{equation}
 \ent{\Xvec|\Yvec=\yvec}=1
\end{equation}
for all $\yvec\in\gvec(\dom{X}\setminus\dom{X}_b)$. We therefore obtain with $P_b=\frac{m^2}{a^2}$ an information loss equal to $\ent{\Xvec|\Yvec}=1-\frac{m^2}{a^2}$.

From the probability masses contained in the sets $\dom{X}_1$, $\dom{X}_2$, and $\dom{X}_3$ we can compute an upper bound on the information loss:
\begin{equation}
 \ent{W}=\frac{m^2}{2a^2}+\frac{3}{2}-\log\frac{a^2-m^2}{a^2}+\frac{m}{a}\log\frac{a-m}{a+m}.
\end{equation}
And evaluating the bounds of Theorem~\ref{thm:bounds} yields
\begin{equation}
 \ent{\Xvec|\Yvec}\leq 1-\frac{m^2}{a^2}\leq \log(2-\frac{m^2}{a^2})\leq 1
\end{equation}
which for $m=0$ all reduce to one bit. In particular, it can be seen that in this case the smallest bound of Theorem~\ref{thm:bounds} is exact.

The exact information loss, together with the second smallest bound from Theorem~\ref{thm:bounds} and with the bound from $\ent{W}$, is shown in Fig.~\ref{fig:lossEx6}. As it can be seen, the closer the parameter $m$ approaches $a$, the smaller the information loss gets. Conversely, for $m=0$ the information loss is exactly one bit. Moreover, it turns out that the bound from $\ent{W}$ is rather loose in this case.

\begin{figure}
 \centering
 \includegraphics[width=0.5\textwidth]{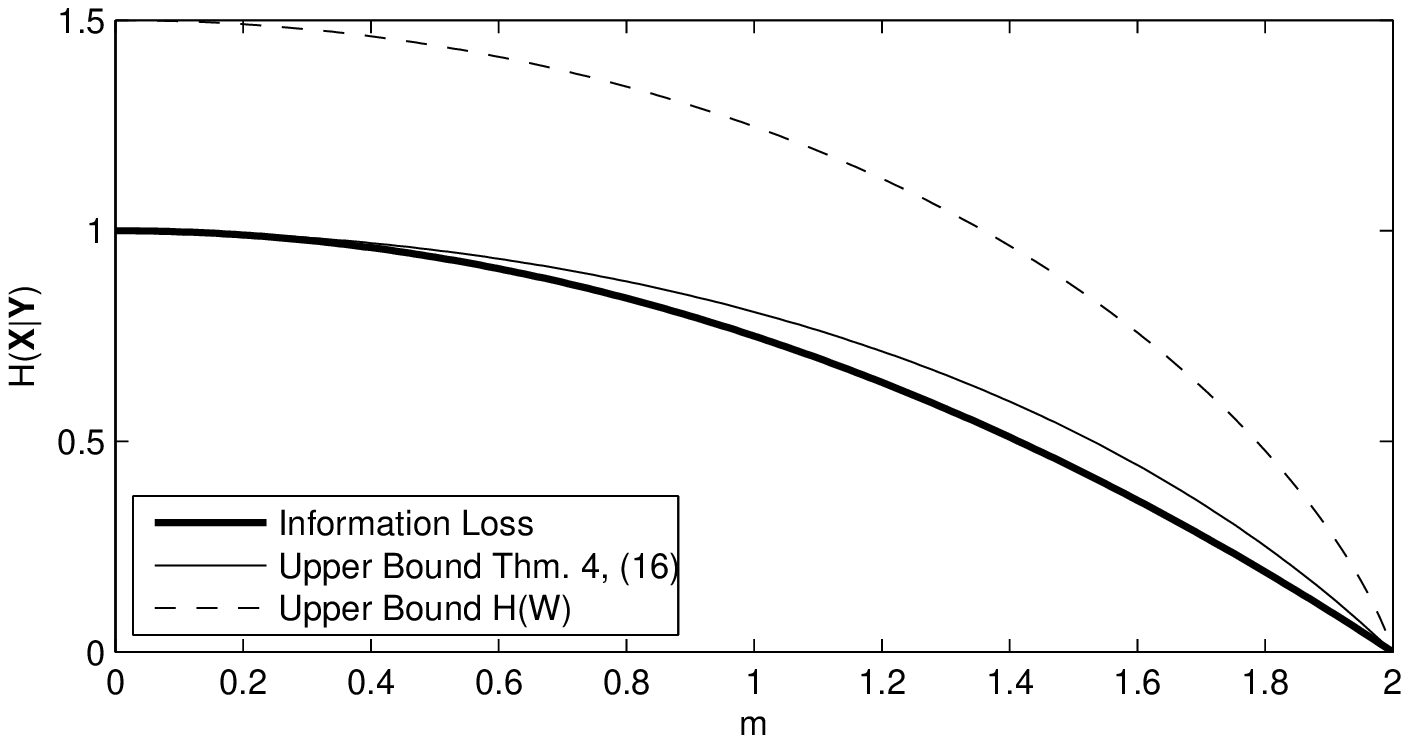}
 \caption{Information loss $\ent{\Xvec|\Yvec}$ of Example 6 for $a=2$.}
 \label{fig:lossEx6}
\end{figure}

}{\section{Example: A two-dimensional transform with finite information loss}
\label{sec:examples}
In this Section we illustrate our theoretical results with the help of an example. The logarithm is taken to base 2.
Let $\Xvec$ be uniformly distributed on the square $\dom{X}=[-a,a]\times [-a,a]$. Equivalently, the two constituing RVs $X_1$ and $X_2$ are independent and uniformly distributed on $[-a,a]$. In other words, while $f_\Xvec(\xvec)=1/4a^2$ for all $\xvec\in\dom{X}$, we have $f_X(x_i)=1/2a$ for $x_i\in[-a,a]$ and $i=1,2$.

We consider a function $\gvec$ performing the mapping:
\begin{eqnarray}
 Y_1&=&X_1\\ Y_2 &=& |X_1-X_2|
\end{eqnarray}
The corresponding Jacobian matrix is a triangular matrix
\begin{equation}
 \Jac{\gvec}{\xvec} = \left[\begin{array}{cc}
1 & 0 \\ \sgn{x_1-x_2} & \sgn{x_2-x_1}
\end{array}\right]
\end{equation}
where $\sgn{\cdot}$ is the sign-function. From this immediately follows that the magnitude of the determinant of the Jacobian matrix is unity for all possible values of $\Xvec$, i.e., $|\det\Jac{\gvec}{\xvec}|=1$ for all $\xvec\in\dom{X}$. The subsets of $\dom{X}$ on which the partitioned functions $\gvec_i$ are bijective are no intervals in this case; they are the triangular halves of the square induced by $x_1=x_2$ (see Fig.~\ref{fig:domainsEx1})
\begin{eqnarray}
 \dom{X}_1 &=& \{[x_1,x_2]\in\dom{X}:\ x_1> x_2\}\\
 \dom{X}_2 &=& \{[x_1,x_2]\in\dom{X}:\ x_1\leq x_2\}.
\end{eqnarray}
The preimage of $\gvec(\xvec)$ is, in any case,
\begin{equation}
 \{[x_1,x_2],[x_1,2x_1-x_2]\}\cap \dom{X}.
\end{equation}
The transform $\gvec$ is bijective whenever $[x_1,2x_1-x_2]\notin\dom{X}$, i.e., if $|2x_1-x_2|>a$.

With the PDF of $\Xvec$ and of its components we obtain for the information loss
\begin{equation}
 \ent{\Xvec|\Yvec} = \int_{-a}^a\int_{-a}^a \frac{1}{4a^2} \log \left(\frac{\frac{1}{2a}+f_X(2x_1-x_2)}{\frac{1}{2a}}\right) dx_1 dx_2
\end{equation}
which is non-zero only if $-a \leq 2x_1-x_2 \leq a$ (numerator and denominator cancel otherwise; no loss occurs in the bijective domain of the function). As a consequence,
\begin{equation}
 \ent{\Xvec|\Yvec} = \int_{-a}^a \int_{\frac{x_2-a}{2}}^{\frac{x_2+a}{2}} \frac{\log 2}{4a^2} dx_1dx_2
 = \int_{-a}^a \frac{1}{4a} dx_2 = \frac{1}{2}.
\end{equation}
The information loss is identical to a half bit. This is intuitive by looking at Fig.~\ref{fig:domainsEx1}, where it can be seen that any information loss occurs only on one half of the domain $\dom{X}$ (shaded in stronger colors). By destroying the sign information, in this area the information loss is equal to one bit.
\begin{figure}
  \begin{center}
 \begin{pspicture}[showgrid=false](-3,-2.2)(3,3,1)
	\pspolygon[fillstyle=solid,fillcolor=red!10](-2,-2)(-2,2)(2,2)
	\pspolygon[fillstyle=solid,fillcolor=red!20](-2,-2)(0,2)(2,2)
	\pspolygon[fillstyle=solid,fillcolor=blue!10](-2,-2)(2,-2)(2,2)
	\pspolygon[fillstyle=solid,fillcolor=blue!20](-2,-2)(0,-2)(2,2)
	\psaxeslabels{->}(0,0)(-2.5,-2.5)(3,2.7){$x_1$}{$x_2$}
	\rput(2.3,-0.3){$a$} \rput(-0.3,2.3){$a$}
	\rput(-1,1){\textcolor{red}{$\dom{X}_2$}}\rput(1,-1){\textcolor{blue}{$\dom{X}_1$}}
 \end{pspicture}
\end{center}
\caption{Subdomains of Example 1. The partitioned functions $\gvec_i$ restricted to the domain of either color are bijective. Furthermore, the overall function $\gvec$ is bijective in areas with light shading.}
\label{fig:domainsEx1}
\end{figure}
}

\section{Conclusion}
In this work, we proposed a mathematically concise definition of information loss for the purpose of establishing a system theory from an information-theoretic point of view. For a certain class of multivariate, vector-valued functions and continuous input variables this information loss was quantified, and the result is accompanied by convenient upper bounds. We further showed a connection between information loss and the differential entropies of the input and output variables.

Finally, a class of systems has been identified for which the information loss is infinite. Vector-quantizers and limiters belong to that class, but also functions which project the input space onto a space of lower dimensionality.

\ifthenelse{\arxiv=1}{
\appendix
\section*{Proof of Theorem~\ref{thm:loss}}
For the proof we use~\eqref{eq:limPart} of Theorem~\ref{thm:loss}, where we take the limit of a sequence of increasingly fine partitions $\partit{n}=\{\hat{\dom{X}}_k^{(n)}\}$ satisfying~\eqref{eq:partLim}. For a given $n$ we write the resulting mutual information $\mutinf{\hat{\Xvec}_n;\Xvec}$ as
\begin{equation}
 \mutinf{\hat{\Xvec}_n;\Xvec} = \expec{\kld{f_{\Xvec|\hat{\Xvec}_n}(\cdot,\hat{\xvec})}{f_\Xvec(\cdot)}}
\end{equation}
where $\kld{\cdot}{\cdot}$ denotes the Kullback-Leibler divergence and the expection is w.r.t. $\hat{\Xvec}_n$. Note that for each possible outcome $\hat{\xvec}_k$ of $\hat{\Xvec}_n$ the conditional probability measure $P_{\Xvec|\hat{\xvec}_k}$ is absolutely continuous w.r.t. the Lebesgue measure (cf.~Section~\ref{sec:extension}). It thus possesses a density
\begin{equation}
 f_{\Xvec|\hat{\Xvec}_n}(\xvec,\hat{\xvec}_k) = \begin{cases} \frac{f_\Xvec(\xvec)}{p(\hat{\xvec}_k)}, & \text{ if } \xvec\in\hat{\dom{X}}_k^{(n)}\\
0,& \text{ else }
\end{cases}\label{eq:fXXn}
\end{equation}
where $p(\hat{\xvec}_k)=P_\Xvec(\hat{\dom{X}}_k^{(n)})$. With the definition of the Kullback-Leibler divergence~\cite[Lemma~5.2.3]{Gray_Entropy} and~\cite[Thm.~5-1]{Papoulis_Probability} we can write the difference of mutual informations in Theorem~\ref{thm:defEq} as
\begin{multline}
 \mutinf{\hat{\Xvec}_n;\Xvec}-\mutinf{\hat{\Xvec}_n;\Yvec} =\\ \sum_k p(\hat{\xvec}_k) \int_{\hat{\dom{X}}_k^{(n)}} \frac{f_\Xvec(\xvec)}{p(\hat{\xvec}_k)} \log \left(\frac{f_{\Xvec|\hat{\Xvec}_n}(\xvec,\hat{\xvec}_k)f_\Yvec (\gvec(\xvec))}{f_{\Yvec|\hat{\Xvec}_n}(\gvec(\xvec),\hat{\xvec}_k)f_\Xvec(\xvec)}\right) d\xvec.
\end{multline}
Rewriting with the indicator function
\begin{equation}
 \eye_A(x) = \begin{cases}
              1, &\text{ if } x\in A\\ 0,&\text{ else}
             \end{cases}
\end{equation}
 this yields
\begin{multline*}
 \mutinf{\hat{\Xvec}_n;\Xvec}-\mutinf{\hat{\Xvec}_n;\Yvec} =\\  \int\limits_{\dom{X}} f_\Xvec(\xvec)\sum_k\left(\eye_{\hat{\dom{X}}_k^{(n)}}(\xvec)\log \left(\frac{f_{\Xvec|\hat{\Xvec}_n}(\xvec,\hat{\xvec}_k)f_\Yvec (\gvec(\xvec))}{f_{\Yvec|\hat{\Xvec}_n}(\gvec(\xvec),\hat{\xvec}_k)f_\Xvec(\xvec)}\right)\right) d\xvec.
\end{multline*}
We can now exploit the relationship~\eqref{eq:fy} for the conditional PDF of $\Yvec$ given $\hat{\Xvec}_n$, and with~\eqref{eq:fXXn} we realize that the function under the integral is monotonically increasing in $n$: Indeed, for finer partitions it is less likely that any element of the preimage $\preimV{\gvec(\xvec)}$ other than $\xvec$ lies in $\hat{\dom{X}}_k^{(n)}$, thus $f_{\Yvec|\hat{\Xvec}_n}(\gvec(\xvec),\hat{\xvec}_k)$ converges to $\frac{f_{\Xvec|\hat{\Xvec}_n}(\xvec,\hat{\xvec}_k) }{|\det\Jac{\gvec}{\xvec}|}$. This holds for all $k$, thus, invoking the monotone converge theorem~\cite[pp.~21]{Rudin_Analysis3} and cancelling the conditional PDFs eliminates the dependence on $k$ and the sum over indicator functions ($\bigcup_k \hat{\dom{X}}_k^{(n)} = \dom{X}$). Substituting the PDF of $\Yvec$ with~\eqref{eq:fy} completes the proof.\endproof}{}

\bibliographystyle{IEEEtran}
\bibliography{IEEEabrv,/afs/spsc.tugraz.at/project/IT4SP/1_d/Papers/InformationProcessing.bib,%
/afs/spsc.tugraz.at/project/IT4SP/1_d/Papers/ProbabilityPapers.bib,%
/afs/spsc.tugraz.at/user/bgeiger/includes/textbooks.bib,%
/afs/spsc.tugraz.at/user/bgeiger/includes/myOwn.bib,%
/afs/spsc.tugraz.at/user/bgeiger/includes/UWB.bib,%
/afs/spsc.tugraz.at/project/IT4SP/1_d/Papers/InformationWaves.bib,%
/afs/spsc.tugraz.at/project/IT4SP/1_d/Papers/ITBasics.bib,%
/afs/spsc.tugraz.at/project/IT4SP/1_d/Papers/HMMRate.bib,%
/afs/spsc.tugraz.at/project/IT4SP/1_d/Papers/ITAlgos.bib}

\end{document}